\newtheorem{lemma}{Lemma}
\newtheorem{theorem}{Theorem}
\newtheorem{proposition}{Proposition}
\newtheorem{example}{Example}
\newtheorem{definition}{Definition}
\begin{document}
\title{Annihilating and creating nonlocality without entanglement\\
by postmeasurement information}
\author{Donghoon Ha}
\author{Jeong San Kim}
\email{freddie1@khu.ac.kr}
\affiliation{Department of Applied Mathematics and Institute of Natural Sciences, Kyung Hee University, Yongin 17104, Republic of Korea}
\begin{abstract}
Nonlocality without entanglement (NLWE) is a nonlocal quantum phenomenon
that arises in separable state discrimination.
We show that the availability of the postmeasurement information about the prepared subensemble can affect the occurrence of NLWE
in discriminating nonorthogonal nonentangled states.
We provide a two-qubit state ensemble consisting of four nonorthogonal separable pure states and show that the postmeasurement information about the prepared subensemble can annihilate NLWE. We also provide another two-qubit state ensemble consisting of four nonorthogonal separable states and show that the postmeasurement information can create NLWE. Our result can provide a useful method to share or hide information using nonorthogonal separable states.
\end{abstract}
\maketitle
\section{Introduction}
\indent Whereas nonorthogonal quantum states cannot be perfectly discriminated, in general,
we can always perfectly discriminate orthogonal quantum states by using appropriate measurements \cite{chef2000,barn20091,berg2010,bae2015}.
However, it is also known that there are some multiparty orthogonal nonentangled (separable) states that cannot be perfectly discriminated only by \emph{local operations and classical communication} (LOCC) \cite{benn19991}.
In other words, there exists some
separable measurements that cannot be implemented by LOCC. In discriminating separable states of multiparty quantum systems, a phenomenon that can be achieved by global measurements but cannot be achieved only by LOCC is called \emph{nonlocality without entanglement} (NLWE) \cite{pere1991,benn19991,chit2013}.\\
\indent In discriminating orthogonal separable states, NLWE occurs when the states cannot be perfectly discriminated by LOCC \cite{benn19991,divi2003,nise2006,xu20162,hald2019,bhat2020}. 
On the other hand, in the problem of discriminating nonorthogonal separable states, NLWE occurs when the globally optimal discrimination cannot be achieved by LOCC \cite{pere1991,duan2007,chit2013,ha20211,ha20212}. For example, the double trine ensemble in a two-qubit system that consists of three nonorthogonal separable states is known to show NLWE in their discrimination \cite{pere1991,chit2013}.\\
\indent Recently, it was shown that
there are some nonorthogonal states
that can be perfectly discriminated when the \emph{postmeasurement information} (PI) about the prepared subensemble is provided \cite{akib2019}.
However, it is also known that there are some nonorthogonal states that are still impossible to be perfectly discriminated even if the PI about the prepared subensemble is available \cite{ball2008,gopa2010,carm2018}. Therefore, in discriminating multiparty nonorthogonal separable states with the PI about the prepared subensemble, NLWE occurs when the globally optimal discrimination cannot be achieved by LOCC even with the help of PI.
A natural question that can be raised here is whether the availability of the PI about the prepared subensemble can affect the occurrence of NLWE.\\
\indent Here, we provide an answer to the question by showing
that the PI about the prepared subensemble can annihilate or
create NLWE in discriminating nonorthogonal separable
states. We first consider an ensemble of two-qubit
separable states having a NLWE phenomenon and show
that the ensemble loses NLWN when the PI about the
prepared subensemble is available, thus, \emph{annihilating NLWE
by PI}. We further consider a two-qubit ensemble of separable
states without the NLWE phenomenon and show that
PI can activate NLWE of the ensemble, therefore, \emph{creating NLWE by PI}.\\
\indent This paper is organized as follows. We first recall the definition and some properties about separable states and
separable measurements in two-qubit systems. We further
recall the definition of \emph{minimum-error discrimination} (ME) \cite{hels1976,hole1979,yuen1975,bae2013},
one representative state discrimination strategy,
and provide some useful properties of ME depending on the availability of PI.
As the main results of this paper, 
we provide a two-qubit state ensemble consisting of four nonorthogonal separable states and show that NLWE occurs in discriminating the states in the ensemble. With the same ensemble, we further show that the occurrence of NLWE in the state discrimination can be vanished when the PI about the prepared subensemble is available. Moreover, we provide another two-qubit state ensemble consisting of four nonorthogonal separable states and show that NLWE does not occur in discriminating the states of the ensemble. With the same ensemble, we further show the occurrence of NLWE in the state discrimination with the PI about the prepared subensemble.
\section{Quantum state discrimination in two-qubit systems}
In two-qubit ($2\otimes2$) systems, a state is described by a density operator $\rho$, that is, a positive-semidefinite operator $\rho\succeq0$ having unit trace $\mathrm{Tr}\rho=1$, acting on a bipartite Hilbert space $\mathcal{H}=\mathbb{C}^{2}\otimes\mathbb{C}^{2}$.
A measurement with $m$ outcomes is expressed by a positive operator valued measure(POVM) $\{M_{i}\}_{i}$ that consists of $m$ positive-semidefinite operators $M_{i}\succeq0$ on $\mathcal{H}$ satisfying $\sum_{i}M_{i}=\mathbbm{1}$, where
$\mathbbm{1}$ is the identity operator on $\mathcal{H}$.
When $\{M_{i}\}_{i}$ is performed on a quantum system prepared with $\rho$, the probability that $M_{i}$ is detected is $\mathrm{Tr}(\rho M_{i})$ due to the Born rule.\\
\indent A positive-semidefinite operator is called \emph{separable} if it is a sum of positive-semidefinite product operators.
Similarly,  a measurement $\{M_{i}\}_{i}$ is called \emph{separable} if $M_{i}$ is separable for all $i$. In particular, a \emph{LOCC measurement} is a separable measurement that can be implemented by LOCC \cite{chit20142}.\\
\indent An operator $E$ on $\mathcal{H}$ is called 
\emph{positive partial transpose} (PPT) \cite{pere1996,horo1996} if 
\begin{equation}
{\rm PT}(E)\succeq0,
\end{equation}
where ${\rm PT}(\cdot)$ is the partial transposition
taken in the standard basis $\{|0\rangle,|1\rangle\}$ on the second subsystem
(Although the PPT property does not depend on the choice of subsystem to be transposed, we take the second subsystem throughout this paper for simplicity).
In two-qubit systems,
PPT is a necessary and sufficient condition for a positive-semidefinite operator to be separable \cite{horo1996}.
Thus, the set of all positive-semidefinite separable operators on $\mathcal{H}$ can be represented as
\begin{equation}\label{eq:sepset}
\mathrm{SEP}=\{E\,|\,E\succeq0,\ \mathrm{PT}(E)\succeq0\}.
\end{equation}
\indent The dual set to $\mathrm{SEP}$ is defined as
\begin{equation}\label{eq:dualsep}
\mathrm{SEP}^{*}=\{A\,|\,\mathrm{Tr}(AB)\geqslant0\ \forall B\in\mathrm{SEP}\}.
\end{equation}
Since all elements of $\mathrm{SEP}$ are positive semidefinite,
all positive semidefinite operators are in $\mathrm{SEP}^{*}$.
We also note that all PPT operators are in $\mathrm{SEP}^{*}$
because all elements of $\mathrm{SEP}$ are PPT and
$\mathrm{Tr}(AB)=\mathrm{Tr}[\mathrm{PT}(A)\mathrm{PT}(B)]$
for any two operators $A$ and $B$.\\
\indent Throughout this paper, we only consider
the situation 
of discriminating states from the state ensemble, 
\begin{equation}\label{eq:ense}
\mathcal{E}=\{\eta_{i},\rho_{i}\}_{i\in\Lambda},\ \Lambda=\{0,1,+,-\},
\end{equation}
where $\rho_{i}$ is a $2\otimes2$ separable state and $\eta_{i}$ is the probability that state $\rho_{i}$ is prepared.\\
\indent The ensemble $\mathcal{E}$ can be seen as an ensemble consisting of 
two subensembles,
\begin{equation}\label{eq:suben}
\begin{array}{ll}
\mathcal{E}_{0}=\{\eta_{i}/\sum_{j\in\mathsf{A}_{0}}\eta_{j},\rho_{i}\}_{i\in\mathsf{A}_{0}},& 
\mathsf{A}_{0}=\{\,0\,,\,1\,\},\\[1mm]
\mathcal{E}_{1}=\{\eta_{i}/\sum_{j\in\mathsf{A}_{1}}\eta_{j},\rho_{i}\}_{i\in\mathsf{A}_{1}},&
\mathsf{A}_{1}=\{+,-\},
\end{array}
\end{equation}
where $\mathcal{E}_{0}$ and $\mathcal{E}_{1}$ are prepared with probabilities $\sum_{j\in\mathsf{A}_{0}}\eta_{j}$ and $\sum_{j\in\mathsf{A}_{1}}\eta_{j}$, 
respectively.
\subsection{Minimum-error discrimination}
Let us consider 
the state discrimination of $\mathcal{E}$ in Eq.~\eqref{eq:ense} using a measurement $\{M_{i}\}_{i\in\Lambda}$
where each measurement outcome from $M_{i}$ means that the prepared state is guessed to be $\rho_{i}$.
\emph{ME of $\mathcal{E}$} is to minimize the average probability of errors that occur in guessing the prepared state.
Equivalently, ME of $\mathcal{E}$ is to maximize the average probability of 
correctly guessing the prepared state where the optimal success probability is defined as
\begin{equation}\label{eq:pgem}
p_{\rm G}(\mathcal{E})=\max_{\{M_{i}\}_{i\in\Lambda}}\sum_{i\in\Lambda}\eta_{i}\mathrm{Tr}(\rho_{i}M_{i}),
\end{equation}
over all possible POVMs.
The optimality of the POVMs in Eq.~\eqref{eq:pgem} can be confirmed by
the following necessary and sufficient condition \cite{hole1979,yuen1975,barn20092}:
\begin{equation}\label{eq:mdnsc}
\sum_{i\in\Lambda}\eta_{i}\rho_{i}M_{i}-
\eta_{j}\rho_{j}\succeq0\ 
\forall j\in\Lambda.
\end{equation}
\indent When the available measurements are limited to LOCC measurements, we denote
the maximum success probability by
\begin{equation}\label{eq:plelocc}
p_{\rm L}(\mathcal{E})=\max_{\rm LOCC}\sum_{i\in\Lambda}\eta_{i}\mathrm{Tr}(\rho_{i}M_{i}).
\end{equation}
Since the states of $\mathcal{E}$ are nonentangled, 
NLWE occurs in terms of ME if and only if ME of $\mathcal{E}$ cannot be achieved only by LOCC, that is,
\begin{equation}\label{eq:nlweme}
p_{\rm L}(\mathcal{E})<p_{\rm G}(\mathcal{E}).
\end{equation}
The following proposition provides an upper bound of $p_{\rm L}(\mathcal{E})$.
\begin{proposition}[\cite{band2015}]\label{pro:pletrh}
If $H$ is a Hermitian operator with
\begin{equation}
H-\eta_{i}\rho_{i}\in\mathrm{SEP}^{*}\ \forall i\in\Lambda,
\end{equation}
then $\mathrm{Tr}H$ is an upper bound of $p_{\rm L}(\mathcal{E})$. 
\end{proposition}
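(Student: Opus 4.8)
The plan is to rely on two facts: every LOCC measurement is separable, and the defining property of the dual cone $\mathrm{SEP}^{*}$ in Eq.~\eqref{eq:dualsep}. First I would fix an arbitrary LOCC measurement $\{M_{i}\}_{i\in\Lambda}$ entering the maximization in Eq.~\eqref{eq:plelocc}. Because a LOCC measurement is by definition a separable measurement, each POVM element lies in the separable cone, so $M_{i}\in\mathrm{SEP}$ for all $i\in\Lambda$.

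Next I would pair the hypothesis with this observation. Since $H-\eta_{i}\rho_{i}\in\mathrm{SEP}^{*}$ and $M_{i}\in\mathrm{SEP}$, the definition of the dual cone gives $\mathrm{Tr}[(H-\eta_{i}\rho_{i})M_{i}]\geqslant0$, equivalently $\mathrm{Tr}(HM_{i})\geqslant\eta_{i}\mathrm{Tr}(\rho_{i}M_{i})$, for each $i\in\Lambda$. Summing these inequalities over $i$ and invoking the POVM normalization $\sum_{i\in\Lambda}M_{i}=\mathbbm{1}$ yields
\[
\mathrm{Tr}\,H=\mathrm{Tr}\Bigl(H\sum_{i\in\Lambda}M_{i}\Bigr)=\sum_{i\in\Lambda}\mathrm{Tr}(HM_{i})\geqslant\sum_{i\in\Lambda}\eta_{i}\mathrm{Tr}(\rho_{i}M_{i}).
\]
As this holds for every LOCC measurement, taking the maximum over all such measurements on the right-hand side produces $\mathrm{Tr}\,H\geqslant p_{\rm L}(\mathcal{E})$, which is the claim.

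The argument is short, so there is no real computational obstacle; the single essential ingredient is the inclusion of LOCC measurements among the separable ones, which is precisely what places each $M_{i}$ in $\mathrm{SEP}$ and thereby activates the dual-cone inequality. I would therefore be careful to state this inclusion explicitly, since dropping separability (as for a general global POVM, whose elements need not be separable) would invalidate the first inequality and hence the whole bound. It is also worth remarking that the same reasoning gives nothing stronger than an upper bound on $p_{\rm L}$ rather than on $p_{\rm G}$, because global optimality in Eq.~\eqref{eq:pgem} ranges over POVMs with possibly nonseparable $M_{i}$, for which $\mathrm{Tr}[(H-\eta_{i}\rho_{i})M_{i}]\geqslant0$ need not hold.
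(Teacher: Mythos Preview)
Your proof is correct. The paper itself does not supply a proof of this proposition; it simply cites \cite{band2015}. Your argument is exactly the standard weak-duality reasoning underlying that result: LOCC measurements are separable, so each $M_{i}\in\mathrm{SEP}$, and pairing with $H-\eta_{i}\rho_{i}\in\mathrm{SEP}^{*}$ followed by summation and the completeness relation yields $\mathrm{Tr}\,H\geqslant\sum_{i}\eta_{i}\mathrm{Tr}(\rho_{i}M_{i})$ for every LOCC POVM. Your closing remarks correctly identify why the bound applies to $p_{\rm L}(\mathcal{E})$ but not to $p_{\rm G}(\mathcal{E})$.
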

\subsection{Quantum state discrimination with postmeasurement information}
\indent In the subsection, we consider ME of $\mathcal{E}$ in Eq.~\eqref{eq:ense} when the classical information $b\in\{0,1\}$ about the prepared subensemble $\mathcal{E}_{b}$ defined in Eq.~\eqref{eq:suben}
is given after performing a measurement.
In this situation, it is known that
a measurement can be expressed by 
a POVM $\{M_{\vec{\omega}}\}_{\vec{\omega}\in\Omega}$ 
with the Cartesian product outcome space,
\begin{equation}
\Omega=\mathsf{A}_{0}\times\mathsf{A}_{1},
\end{equation}
where each $M_{(\omega_{0},\omega_{1})}$ indicates the detection of $\rho_{\omega_{0}}$ or $\rho_{\omega_{1}}$ 
according to PI $b=0$ or $1$, respectively \cite{ball2008,gopa2010}. \\
\begin{figure}[!tt]
\centerline{\includegraphics*[bb=0 130 840 460,scale=0.29]{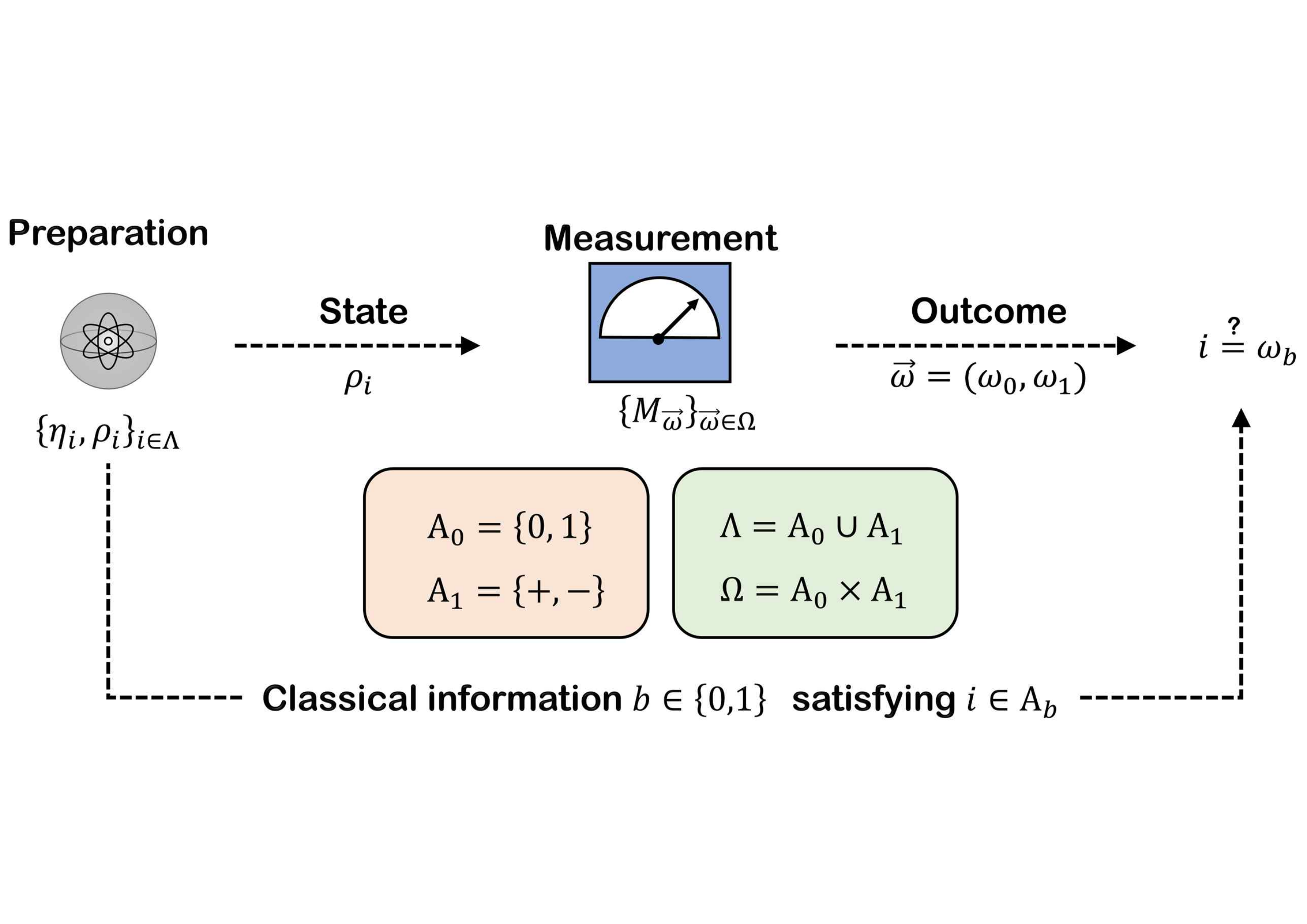}}
\caption{{\bf ME of $\mathcal{E}=\{\eta_{i},\rho_{i}\}_{i\in\Lambda}$ with PI.}
For each $i\in\Lambda$, state $\rho_{i}$ is prepared with the probability $\eta_{i}$. After performing a measurement $\{M_{\vec{\omega}}\}_{\vec{\omega}\in\Omega}$,
the classical information $b\in\{0,1\}$ satisfying $i\in\mathsf{A}_{b}$ is given.
For each measurement outcome $(\omega_{0},\omega_{1})=\vec{\omega}\in\Omega$, the prepared state is guessed to be $\rho_{\omega_{0}}$ or $\rho_{\omega_{1}}$ according to PI $b=0$ or $1$. When $\rho_{i}$ is prepared, it is correctly guessed if a measurement outcome $\vec{\omega}\in\Omega$ with $\omega_{b}=i$ is obtained; otherwise, errors occur in guessing the prepared state.
}\label{fig:post}
\end{figure}
\indent \emph{ME of $\mathcal{E}$ with PI} is
to minimize the average error probability.
Equivalently, ME of $\mathcal{E}$ with PI is to maximize the average probability of correct guessing where the optimal success probability is defined as
\begin{equation}\label{eq:pgpie}
p_{\rm G}^{\rm PI}(\mathcal{E})=
\max_{\{M_{\vec{\omega}}\}_{\vec{\omega}\in\Omega}}
\sum_{b\in\{0,1\}}\sum_{i\in\mathsf{A}_{b}}
\eta_{i}\mathrm{Tr}\Big[\rho_{i} \sum_{\substack{\vec{\omega}\in\Omega\\\omega_{b}=i}} M_{\vec{\omega}}\Big]
\end{equation}
over all possible POVMs. 
Note that when $\rho_{i}$ is prepared and PI $b\in\{0,1\}$ with $i\in\mathsf{A}_{b}$ is given, the prepared state is correctly guessed if we obtain a measurement outcome $\vec{\omega}\in\Omega$ with $\omega_{b}=i$; otherwise, errors occur in guessing the prepared state.
Figure~\ref{fig:post} illustrates ME of $\mathcal{E}$ with PI.\\
\indent We note that for a given POVM $\{M_{\vec{\omega}}\}_{\vec{\omega}\in\Omega}$, the average probability of correct guessing, that is, the right-hand side of Eq.~\eqref{eq:pgpie} without maximization, can be rewritten as 
\begin{eqnarray}
&&\sum_{b\in\{0,1\}}\sum_{i\in\mathsf{A}_{b}}
\eta_{i}\mathrm{Tr}\Big[\rho_{i} \sum_{\substack{\vec{\omega}\in\Omega\\\omega_{b}=i}} M_{\vec{\omega}}\Big]\nonumber\\
&=&\sum_{b\in\{0,1\}}\sum_{i\in\mathsf{A}_{b}}
\sum_{\substack{\vec{\omega}\in\Omega\\\omega_{b}=i}}
\mathrm{Tr}(\eta_{\omega_{b}}\rho_{\omega_{b}}M_{\vec{\omega}})\nonumber\\
&=&\sum_{b\in\{0,1\}}\sum_{\vec{\omega}\in\Omega}\mathrm{Tr}(\eta_{\omega_{b}}\rho_{\omega_{b}} M_{\vec{\omega}})\nonumber\\
&=&2\sum_{\vec{\omega}\in\Omega}
\frac{1}{2}\mathrm{Tr}\Big[\sum_{b\in\{0,1\}}\eta_{\omega_{b}}\rho_{\omega_{b}} M_{\vec{\omega}}\Big]\nonumber\\
&=&2\sum_{\vec{\omega}\in\Omega}\tilde{\eta}_{\vec{\omega}}
\mathrm{Tr}(\tilde{\rho}_{\vec{\omega}}M_{\vec{\omega}}),\label{eq:aplpie}
\end{eqnarray}
where 
\begin{equation}\label{eq:trhow}
\begin{array}{rcl}
\tilde{\eta}_{\vec{\omega}}
=\frac{1}{2}\sum_{b\in\{0,1\}}\eta_{w_{b}},\ 
\tilde{\rho}_{\vec{\omega}}
=\frac{\sum_{b\in\{0,1\}}\eta_{w_{b}}\rho_{\omega_{b}}}{\sum_{b'\in\{0,1\}}\eta_{w_{b'}}}.
\end{array}
\end{equation}\\
\indent When the available measurements are limited to LOCC measurements, we denote the maximum success probability by
\begin{eqnarray}\label{eq:plpie}
p_{\rm L}^{\rm PI}(\mathcal{E})&=&
\max_{\mathrm{LOCC}}
\sum_{b\in\{0,1\}}\sum_{i\in\mathsf{A}_{b}}
\eta_{i}\mathrm{Tr}\Big[\rho_{i} \sum_{\substack{\vec{\omega}\in\Omega\\\omega_{b}=i}} M_{\vec{\omega}}\Big].
\end{eqnarray}
Because the states in $\mathcal{E}$ are nonentangled,
NLWE occurs in terms of ME with PI if and only if ME of $\mathcal{E}$ with PI cannot be achieved only by LOCC, that is,
\begin{equation}\label{eq:defnlwemepi}
p_{\rm L}^{\rm PI}(\mathcal{E})<p_{\rm G}^{\rm PI}(\mathcal{E}).
\end{equation}
\indent Here, we note that $\{\tilde{\eta}_{\vec{\omega}}\}_{\vec{\omega}\in\Omega}$ is a set of positive numbers satisfying $\sum_{\vec{\omega}\in\Omega}\tilde{\eta}_{\vec{\omega}}=1$ and 
$\{\tilde{\rho}_{\vec{\omega}}\}_{\vec{\omega}\in\Omega}$ is a set of density operators.
Thus, Eqs.~\eqref{eq:aplpie} and \eqref{eq:plpie} imply that $p_{\rm L}^{\rm PI}(\mathcal{E})$ is twice the maximum success probability for ME of $\tilde{\mathcal{E}}$ ,
\begin{equation}\label{eq:plpie2}
p_{\rm L}^{\rm PI}(\mathcal{E})=2p_{\rm L}(\tilde{\mathcal{E}}),
\end{equation}
where $\tilde{\mathcal{E}}$ is the ensemble consisting of the average states $\tilde{\rho}_{\vec{\omega}}$ prepared with the nonzero probabilities $\tilde{\eta}_{\vec{\omega}}$ in Eq.~(\ref{eq:trhow}),
\begin{equation}\label{eq:tedef}
\begin{array}{c}
\tilde{\mathcal{E}}=\{\tilde{\eta}_{\vec{\omega}},\tilde{\rho}_{\vec{\omega}}\}_{\vec{\omega}\in\Omega}.
\end{array}
\end{equation}
\indent In the following lemma, we provide an upper bound of $p_{\rm L}^{\rm PI}(\mathcal{E})$.
\begin{lemma}\label{lem:plpietrh}
If $\tilde{H}$ is a Hermitian operator satisfying
\begin{equation}
\tilde{H}-\tilde{\eta}_{\vec{\omega}}\tilde{\rho}_{\vec{\omega}}\in\mathrm{SEP}^{*}\ \forall\vec{\omega}\in\Omega,
\end{equation}
then $2\mathrm{Tr}\tilde{H}$ is an upper bound of $p_{\rm L}^{\rm PI}(\mathcal{E})$.
\end{lemma}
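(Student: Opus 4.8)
The plan is to derive this lemma directly from Proposition~\ref{pro:pletrh} together with the identity~\eqref{eq:plpie2}, rather than constructing a fresh bound from scratch. The key observation is that the hypothesis $\tilde{H}-\tilde{\eta}_{\vec{\omega}}\tilde{\rho}_{\vec{\omega}}\in\mathrm{SEP}^{*}$ for all $\vec{\omega}\in\Omega$ is precisely the hypothesis of Proposition~\ref{pro:pletrh} applied to the auxiliary ensemble $\tilde{\mathcal{E}}=\{\tilde{\eta}_{\vec{\omega}},\tilde{\rho}_{\vec{\omega}}\}_{\vec{\omega}\in\Omega}$ defined in Eq.~\eqref{eq:tedef}, with the role of the index set $\Lambda$ now played by $\Omega$.

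First I would recall that, as established in Eq.~\eqref{eq:plpie2}, the LOCC success probability with postmeasurement information equals twice the ordinary LOCC success probability for the auxiliary ensemble, namely $p_{\rm L}^{\rm PI}(\mathcal{E})=2p_{\rm L}(\tilde{\mathcal{E}})$. This reduces the task of bounding $p_{\rm L}^{\rm PI}(\mathcal{E})$ to that of bounding $p_{\rm L}(\tilde{\mathcal{E}})$, which is exactly the kind of quantity controlled by Proposition~\ref{pro:pletrh}.

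Next I would apply Proposition~\ref{pro:pletrh} to $\tilde{\mathcal{E}}$. Since $\{\tilde{\eta}_{\vec{\omega}}\}_{\vec{\omega}\in\Omega}$ is a valid probability distribution and $\{\tilde{\rho}_{\vec{\omega}}\}_{\vec{\omega}\in\Omega}$ is a collection of genuine density operators, as noted in the text preceding the lemma, $\tilde{\mathcal{E}}$ is a legitimate state ensemble to which the proposition applies. The hypothesis of the lemma guarantees that $\tilde{H}$ satisfies the required condition $\tilde{H}-\tilde{\eta}_{\vec{\omega}}\tilde{\rho}_{\vec{\omega}}\in\mathrm{SEP}^{*}$ for every $\vec{\omega}\in\Omega$, so the proposition yields $\mathrm{Tr}\tilde{H}\geqslant p_{\rm L}(\tilde{\mathcal{E}})$. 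Multiplying by $2$ and invoking Eq.~\eqref{eq:plpie2} gives $2\mathrm{Tr}\tilde{H}\geqslant 2p_{\rm L}(\tilde{\mathcal{E}})=p_{\rm L}^{\rm PI}(\mathcal{E})$, which is the claim.

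There is no serious obstacle here: the content of the lemma is already packaged inside Proposition~\ref{pro:pletrh}, and the only thing to verify is that the reduction in Eq.~\eqref{eq:plpie2} is applied correctly and that $\tilde{\mathcal{E}}$ meets the structural requirements of an ensemble. The one point deserving a moment of care is confirming that the relevant dual-cone structure is unaffected by the averaging in Eq.~\eqref{eq:trhow}, i.e.\ that forming the convex combinations $\tilde{\rho}_{\vec{\omega}}$ does not change the ambient space or the definition of $\mathrm{SEP}^{*}$; but since $\mathrm{SEP}^{*}$ is defined on the fixed two-qubit space $\mathcal{H}$ and the hypothesis is imposed directly on the averaged operators, this is immediate.
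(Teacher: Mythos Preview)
Your proposal is correct and follows exactly the same approach as the paper's proof: apply Proposition~\ref{pro:pletrh} to the auxiliary ensemble $\tilde{\mathcal{E}}$ to obtain $\mathrm{Tr}\tilde{H}\geqslant p_{\rm L}(\tilde{\mathcal{E}})$, then invoke Eq.~\eqref{eq:plpie2} to conclude $2\mathrm{Tr}\tilde{H}\geqslant p_{\rm L}^{\rm PI}(\mathcal{E})$. The paper's version is simply more terse.
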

\begin{proof}
For the ensemble $\tilde{\mathcal{E}}$ in Eq.~\eqref{eq:tedef}, Proposition~\ref{pro:pletrh} implies that $\mathrm{Tr}\tilde{H}$ is an upper bound of $p_{\rm L}(\tilde{\mathcal{E}})$. Thus, $2\mathrm{Tr}\tilde{H}$ is an upper bound of $p_{\rm L}^{\rm PI}(\mathcal{E})$ due to Eq.~\eqref{eq:plpie2}.
\end{proof}
\indent We close this section by providing the concept of \emph{annihilating} and \emph{creating} NLWE by PI.
\begin{definition}
For ME of an ensemble $\mathcal{E}$ in Eq.~\eqref{eq:ense}, we say that the PI $b\in\{0,1\}$ about the prepared subensemble $\mathcal{E}_{b}$ in Eq.~\eqref{eq:suben} annihilates NLWE if NLWE occurs in discriminating the states of $\mathcal{E}$ and
the availability of PI $b$ about the prepared subensemble vanishes the occurrence of NLWE, that is,
\begin{equation}
p_{\rm L}(\mathcal{E})<p_{\rm G}(\mathcal{E}),~
p_{\rm L}^{\rm PI}(\mathcal{E})=p_{\rm G}^{\rm PI}(\mathcal{E}).
\end{equation}
Also, we say that the PI $b$ about the prepared subensemble $\mathcal{E}_{b}$ creates NLWE if NLWE does not occur in discriminating the states of $\mathcal{E}$ and
the availability of PI $b$ about the prepared subensemble
releases the occurrence of NLWE, that is,
\begin{equation}
p_{\rm L}(\mathcal{E})=p_{\rm G}(\mathcal{E}),~
p_{\rm L}^{\rm PI}(\mathcal{E})<p_{\rm G}^{\rm PI}(\mathcal{E}).
\end{equation}
\end{definition}

\section{Annihilating NLWE by postmeasurement information}\label{sec:annihilating}
In this section, we consider a situation where the PI about the prepared subensemble $\mathcal{E}_{b}$ in Eq.~\eqref{eq:suben} annihilates NLWE. We first provide a specific example of a state ensemble $\mathcal{E}$ in Eq.~\eqref{eq:ense} 
and show that NLWE occurs in discriminating the states in the ensemble. With the same ensemble, we further show that the occurrence of NLWE in the state discrimination can be vanished if the PI about the prepared subensemble is available, thus, annihilating NLWE by PI.
\begin{example}\label{ex:annihilate}
Let us consider the ensemble $\mathcal{E}$ in Eq.~\eqref{eq:ense} with
\begin{equation}\label{eq:ftqs02}
\begin{array}{lcllcl}
\eta_{0}&=&\frac{\gamma}{2(1+\gamma)},&
\rho_{0}&=&|0\rangle\!\langle0|\!\otimes\!|0\rangle\!\langle0|,\\[1mm]
\eta_{1}&=&\frac{\gamma}{2(1+\gamma)},&
\rho_{1}&=&|0\rangle\!\langle0|\!\otimes\!|1\rangle\!\langle1|,\\[1mm]
\eta_{+}&=&\frac{1}{2(1+\gamma)},&
\rho_{+}&=&|+\rangle\!\langle+|\!\otimes\!|+\rangle\!\langle+|,\\[1mm]
\eta_{-}&=&\frac{1}{2(1+\gamma)},&
\rho_{-}&=&|-\rangle\!\langle-|\!\otimes\!|-\rangle\!\langle-|,
\end{array}
\end{equation}
where $2\leqslant\gamma<\infty$,  $\{|0\rangle,|1\rangle\}$ is the standard basis in one-qubit system, and
$|\pm\rangle=\frac{1}{\sqrt{2}}(|0\rangle\pm|1\rangle)$.
In this case, the subensembles in Eq.~\eqref{eq:suben} become
\begin{equation}
\begin{array}{l}
\mathcal{E}_{0}=\{\frac{1}{2},|0\rangle\!\langle0|\!\otimes\!|0\rangle\!\langle0|,\ 
\frac{1}{2},|0\rangle\!\langle0|\!\otimes\!|1\rangle\!\langle1|\},\\[1mm]
\mathcal{E}_{1}=\{\frac{1}{2},|+\rangle\!\langle+|\!\otimes\!|+\rangle\!\langle+|,\ 
\frac{1}{2},|-\rangle\!\langle-|\!\otimes\!|-\rangle\!\langle-|\},
\end{array}
\end{equation}
with the probabilities of preparation  
$\frac{\gamma}{1+\gamma}$ and $\frac{1}{1+\gamma}$, respectively.
\end{example}
\indent To show the occurrence of NLWE in terms of ME about the state ensemble $\mathcal{E}$ in Example~\ref{ex:annihilate}, we first evaluate the optimal success probability $p_{\rm G}(\mathcal{E})$ defined in Eq.~\eqref{eq:pgem}.
From the optimality condition in Eq.~\eqref{eq:mdnsc}
together with a straightforward calculation, 
we can easily verify that
the following POVM $\{M_{i}\}_{i\in\Lambda}$ is optimal for $p_{\rm G}(\mathcal{E})$:
\begin{equation}\label{eq:opm2}
\begin{array}{lcl}
M_{0}=|\Gamma_{0}\rangle\!\langle\Gamma_{0}|,\ 
M_{+}=|\mu_{+}\rangle\!\langle\mu_{+}|\otimes|+\rangle\!\langle+|,\\[1mm]
M_{1}=|\Gamma_{1}\rangle\!\langle\Gamma_{1}|,\
M_{-}=|\mu_{-}\rangle\!\langle\mu_{-}|\otimes|-\rangle\!\langle-|,
\end{array}
\end{equation}
where 
\begin{equation}
\begin{array}{rcl}
|\mu_{\pm}\rangle&=&\sqrt{\frac{1}{2}-\frac{\gamma}{2\sqrt{1+\gamma^{2}}}}|0\rangle\pm\sqrt{\frac{1}{2}+\frac{\gamma}{2\sqrt{1+\gamma^{2}}}}|1\rangle,\\[1mm]
|\Gamma_{0}\rangle&=&
\sqrt{\frac{1}{2}+\frac{\gamma}{2\sqrt{1+\gamma^{2}}}}|00\rangle-\sqrt{\frac{1}{2}-\frac{\gamma}{2\sqrt{1+\gamma^{2}}}}|11\rangle,\\[1mm]
|\Gamma_{1}\rangle&=&
\sqrt{\frac{1}{2}+\frac{\gamma}{2\sqrt{1+\gamma^{2}}}}|01\rangle-\sqrt{\frac{1}{2}-\frac{\gamma}{2\sqrt{1+\gamma^{2}}}}|10\rangle.
\end{array}
\end{equation}
Thus, the optimality of the POVM $\{M_{i}\}_{i\in\Lambda}$ in Eq.~\eqref{eq:opm2} and the definition of $p_{\rm G}(\mathcal{E})$ lead us to
\begin{equation}\label{eq:pge12}
\begin{array}{c}
p_{\rm G}(\mathcal{E})=\frac{1}{2}\Big(1+\frac{\sqrt{1+\gamma^{2}}}{1+\gamma}\Big).
\end{array}
\end{equation}
\indent In order to obtain 
the maximum success probability $p_{\rm L}(\mathcal{E})$ in Eq.~\eqref{eq:plelocc}, we consider
lower and upper bounds of $p_{\rm L}(\mathcal{E})$.
A lower bound of $p_{\rm L}(\mathcal{E})$ can be obtained from
the following POVM $\{M_{i}\}_{i\in\Lambda}$:
\begin{equation}\label{eq:loccm}
\begin{array}{ll}
M_{0}=|0\rangle\!\langle0|\otimes|0\rangle\!\langle0|,&
M_{+}=|1\rangle\!\langle1|\otimes|+\rangle\!\langle+|,
\\[1mm]
M_{1}=|0\rangle\!\langle0|\otimes|1\rangle\!\langle1|,&
M_{-}=|1\rangle\!\langle1|\otimes|-\rangle\!\langle-|,
\end{array}
\end{equation}
which gives $\frac{1}{2}(1+\frac{\gamma}{1+\gamma})$ as the success probability in discriminating the states of the ensemble $\mathcal{E}$ in Example~\ref{ex:annihilate}.
We also note that the measurement given in Eq.~\eqref{eq:loccm} can be achieved with finite-round LOCC: We perform a measurement $\{|0\rangle\!\langle0|,|1\rangle\!\langle1|\}$ on the first subsystem
and measure $\{|0\rangle\!\langle0|,|1\rangle\!\langle1|\}$ 
or $\{|+\rangle\!\langle+|,|-\rangle\!\langle-|\}$
on the second subsystem
depending on the first measurement result $|0\rangle\!\langle0|$ or $|1\rangle\!\langle1|$.
Thus, the success probability for the LOCC measurement in Eq.~\eqref{eq:loccm} is a lower bound of $p_{\rm L}(\mathcal{E})$,
\begin{equation}\label{eq:lowermd}
\begin{array}{c}
p_{\rm L}(\mathcal{E})\geqslant\frac{1}{2}\Big(1+\frac{\gamma}{1+\gamma}\Big).
\end{array}
\end{equation}
\indent To obtain an upper bound of $p_{\rm L}(\mathcal{E})$, let us consider a Hermitian operator,
\begin{eqnarray}
H=\mbox{$\frac{1}{4(1+\gamma)}(2\gamma|0\rangle\!\langle0|\otimes\sigma_{0}
+|1\rangle\!\langle1|\otimes\sigma_{0}+\sigma_{1}\otimes\sigma_{1}),$}\quad\ 
\end{eqnarray}
where $\sigma_{0}$ and $\sigma_{1}$ are the Pauli operators,
\begin{equation}\label{eq:idpauli}
\begin{array}{l}
\sigma_{0}=|0\rangle\!\langle0|+|1\rangle\!\langle1|,\\
\sigma_{1}=|0\rangle\!\langle1|+|1\rangle\!\langle0|.
\end{array}
\end{equation}
We will show that $H-\eta_{i}\rho_{i}\in\mathrm{SEP}^{*}$ for any $i\in\Lambda$, therefore $\mathrm{Tr}H$ is an upper bound of $p_{\rm L}(\mathcal{E})$ by Proposition~\ref{pro:pletrh}.\\
\indent For each $i\in\Lambda$,  $H-\eta_{i}\rho_{i}$ can be rewritten as 
\begin{equation}\label{eq:fhss}
\begin{array}{rcl}
H-\eta_{0}\rho_{0}&=&\frac{1}{4(1+\gamma)}
\big[T_{0}+|11\rangle\!\langle11|+\mathrm{PT}(T_{0})\big],
\\[2mm]
H-\eta_{1}\rho_{1}&=&\frac{1}{4(1+\gamma)}
\big[T_{1}+|10\rangle\!\langle10|+\mathrm{PT}(T_{1})\big],
\\[2mm]
H-\eta_{+}\rho_{+}&=&\frac{2\gamma-1}{4(1+\gamma)}
(\rho_{0}+\rho_{1})+\frac{1}{2(1+\gamma)}\rho_{-},
\\[2mm]
H-\eta_{-}\rho_{-}&=&\frac{2\gamma-1}{4(1+\gamma)}
(\rho_{0}+\rho_{1})
+\frac{1}{2(1+\gamma)}\rho_{+},
\end{array}
\end{equation}
where $T_{0}$ and $T_{1}$ are positive-semidefinite operators,
\begin{equation}
\begin{array}{c}
T_{0}=\gamma|01\rangle\!\langle01|+|01\rangle\!\langle10|+|10\rangle\!\langle01|+\frac{1}{2}|10\rangle\!\langle10|,\\[1mm]
T_{1}=\gamma|00\rangle\!\langle00|+|00\rangle\!\langle11|+|11\rangle\!\langle00|+\frac{1}{2}|11\rangle\!\langle11|,
\end{array}
\end{equation}
for $2\leqslant\gamma<\infty$.
In other words, each $H-\eta_{i}\rho_{i}$ in Eq.~\eqref{eq:fhss} is a sum of positive-semidefinite operators and PPT operators.
From the argument after Eq.~\eqref{eq:dualsep},
$H-\eta_{i}\rho_{i}$ is in $\mathrm{SEP}^{*}$ for each  $i\in\Lambda$, thus, 
Proposition~\ref{pro:pletrh} leads us to
\begin{equation}\label{eq:uppermd}
\begin{array}{c}
p_{\rm L}(\mathcal{E})\leqslant\mathrm{Tr}H=\frac{1}{2}\Big(1+\frac{\gamma}{1+\gamma}\Big).
\end{array}
\end{equation}
Inequalities~\eqref{eq:lowermd} and \eqref{eq:uppermd} imply
\begin{equation}\label{eq:ple12}
\begin{array}{c}
p_{\rm L}(\mathcal{E})=\frac{1}{2}\Big(1+\frac{\gamma}{1+\gamma}\Big).
\end{array}
\end{equation}
\indent From Eqs.~\eqref{eq:pge12} and \eqref{eq:ple12}, we note that there exists
a nonzero gap between
$p_{\rm G}(\mathcal{E})$ and $p_{\rm L}(\mathcal{E})$,
\begin{equation}\label{eq:plpgfg}
\begin{array}{c}
p_{\rm L}(\mathcal{E})
=\frac{1}{2}\Big(1+\frac{\gamma}{1+\gamma}\Big)<
\frac{1}{2}\Big(1+\frac{\sqrt{1+\gamma^{2}}}{1+\gamma}\Big)
=p_{\rm G}(\mathcal{E}),
\end{array}
\end{equation}
for $2\leqslant\gamma<\infty$,  thus, NLWE occurs in terms of ME in discriminating the states of the ensemble $\mathcal{E}$ in Example~\ref{ex:annihilate}.\\
\indent Now, we show that the occurrence of NLWE in Inequality~\eqref{eq:plpgfg} can be vanished when the PI about the prepared subensemble is available. 
Let us consider the following POVM $\{M_{\vec{\omega}}\}_{\vec{\omega}\in\Omega}$,
\begin{equation}\label{eq:mepim1}
\begin{array}{c}
M_{(0,+)}=|+\rangle\!\langle+|\!\otimes\!|0\rangle\!\langle0|,\
M_{(1,+)}=|+\rangle\!\langle+|\!\otimes\!|1\rangle\!\langle1|,\\[1mm]
M_{(0,-)}=|-\rangle\!\langle-|\!\otimes\!|0\rangle\!\langle0|,\
M_{(1,-)}=|-\rangle\!\langle-|\!\otimes\!|1\rangle\!\langle1|,
\end{array}
\end{equation}
which can be performed using finite-round LOCC: Two local measurements
$\{|+\rangle\!\langle+|,|-\rangle\!\langle-|\}$ and 
$\{|0\rangle\!\langle0|,|1\rangle\!\langle1|\}$
are performed on first and second subsystems, respectively.
Moreover, it is a straightforward calculation to show that
the success probability for the LOCC measurement of Eq.~\eqref{eq:mepim1} in discriminating the states in the ensemble $\mathcal{E}$ with PI is one. That is, the states in $\mathcal{E}$ can be perfectly discriminated when PI is available.\\
\indent We note that the success probability obtained from the LOCC measurement in Eq.~\eqref{eq:mepim1} is a lower bound of $p_{\rm L}^{\rm PI}(\mathcal{E})$ in Eq.~\eqref{eq:plpie}, therefore
\begin{equation}
p_{\rm L}^{\rm PI}(\mathcal{E})\geqslant1
\end{equation}
for the ensemble $\mathcal{E}$ in Example~\ref{ex:annihilate}.
Moreover, from the definitions of $p_{\rm G}^{\rm PI}(\mathcal{E})$ and $p_{\rm L}^{\rm PI}(\mathcal{E})$ in Eqs.~\eqref{eq:pgpie} and \eqref{eq:plpie}, respectively, we have
\begin{equation}
p_{\rm G}^{\rm PI}(\mathcal{E})\geqslant p_{\rm L}^{\rm PI}(\mathcal{E}).
\end{equation}
As both $p_{\rm G}^{\rm PI}(\mathcal{E})$ and $p_{\rm L}^{\rm PI}(\mathcal{E})$ are bounded above by 1, we have
\begin{equation}\label{eq:pgplping}
p_{\rm G}^{\rm PI}(\mathcal{E})= p_{\rm L}^{\rm PI}(\mathcal{E})=1.
\end{equation}
Thus, NLWE does not occur in terms ME in discriminating the states of the ensemble $\mathcal{E}$ in Example~\ref{ex:annihilate} when the PI about the prepared subensemble is available. \\
\indent Inequality~\eqref{eq:plpgfg} shows that NLWE occurs in terms of ME about the ensemble $\mathcal{E}$ in Example~\ref{ex:annihilate},
whereas Eq.~\eqref{eq:pgplping} shows that NLWE does not occur when PI is available.
Figure~\ref{fig:lme} illustrates the relative order of $p_{\rm G}(\mathcal{E})$, $p_{\rm L}(\mathcal{E})$, $p_{\rm G}^{\rm PI}(\mathcal{E})$, and $p_{\rm L}^{\rm PI}(\mathcal{E})$ for the range of $\frac{1}{3}\leqslant\eta_{0}<\frac{1}{2}$.
\begin{theorem}\label{thm:lme}
For ME of the ensemble in Example~\ref{ex:annihilate},
the PI about the prepared subensemble annihilates NLWE.
\end{theorem}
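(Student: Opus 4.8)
The plan is to verify directly the two defining conditions for annihilating NLWE: the strict gap $p_{\rm L}(\mathcal{E}) < p_{\rm G}(\mathcal{E})$ in the absence of PI, and the equality $p_{\rm L}^{\rm PI}(\mathcal{E}) = p_{\rm G}^{\rm PI}(\mathcal{E})$ once the subensemble label is supplied. Both ingredients are assembled in the discussion preceding the statement, so at the level of the theorem the proof amounts to collecting Inequality~\eqref{eq:plpgfg} and Eq.~\eqref{eq:pgplping} and quoting the definition; the substance lies in how those two facts are established.

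For the first condition I would first pin down $p_{\rm G}(\mathcal{E})$ by exhibiting the candidate global POVM in Eq.~\eqref{eq:opm2} and certifying its optimality through the necessary-and-sufficient condition~\eqref{eq:mdnsc}, yielding the closed form~\eqref{eq:pge12}. I would then sandwich $p_{\rm L}(\mathcal{E})$ between the explicit finite-round LOCC measurement in Eq.~\eqref{eq:loccm}, which produces the lower bound~\eqref{eq:lowermd}, and a dual upper bound from Proposition~\ref{pro:pletrh}. The latter requires producing the Hermitian witness $H$ and checking $H-\eta_{i}\rho_{i}\in\mathrm{SEP}^{*}$ for every $i\in\Lambda$; the decomposition~\eqref{eq:fhss} achieves this by writing each $H-\eta_{i}\rho_{i}$ as a sum of a positive-semidefinite operator and a PPT operator, both of which lie in $\mathrm{SEP}^{*}$ by the remark following Eq.~\eqref{eq:dualsep}. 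Since the two bounds coincide, $p_{\rm L}(\mathcal{E})$ equals~\eqref{eq:ple12}, and comparison with~\eqref{eq:pge12} delivers the strict inequality~\eqref{eq:plpgfg} for all $2\leqslant\gamma<\infty$.

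For the second condition the strategy is to collapse both PI-success probabilities to $1$. I would display the product LOCC measurement in Eq.~\eqref{eq:mepim1} and verify by direct computation that, with the PI bit labelling the prepared subensemble, it discriminates the four states perfectly, forcing $p_{\rm L}^{\rm PI}(\mathcal{E})\geqslant 1$. Combining this with the elementary chain $1\geqslant p_{\rm G}^{\rm PI}(\mathcal{E})\geqslant p_{\rm L}^{\rm PI}(\mathcal{E})$ pins both quantities at $1$, as in Eq.~\eqref{eq:pgplping}, so no gap survives when PI is available.

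The main obstacle is the exact evaluation of $p_{\rm L}(\mathcal{E})$: the lower bound is routine, but matching it from above demands the right witness $H$ together with the nonobvious $\mathrm{SEP}^{*}$-decomposition~\eqref{eq:fhss}, where the hypothesis $\gamma\geqslant 2$ is precisely what secures positivity of $T_{0}$, $T_{1}$ and of the residual coefficients in the $i\in\{+,-\}$ cases. Once that tight bound is in hand and the PI-optimal LOCC scheme is exhibited, the theorem follows immediately from the definition of annihilating NLWE.
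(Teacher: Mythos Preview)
Your proposal is correct and follows essentially the same route as the paper: certify $p_{\rm G}(\mathcal{E})$ via the POVM~\eqref{eq:opm2} and condition~\eqref{eq:mdnsc}, sandwich $p_{\rm L}(\mathcal{E})$ between the LOCC scheme~\eqref{eq:loccm} and the dual witness $H$ with decomposition~\eqref{eq:fhss}, and then collapse $p_{\rm L}^{\rm PI}(\mathcal{E})=p_{\rm G}^{\rm PI}(\mathcal{E})=1$ using the product LOCC measurement~\eqref{eq:mepim1}. Your observation that $\gamma\geqslant 2$ is precisely what ensures positivity of $T_{0}$, $T_{1}$, and the residual coefficients in~\eqref{eq:fhss} is spot on.
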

\begin{figure}[!t]
\centerline{\includegraphics*[bb=20 20 430 310,scale=0.67]{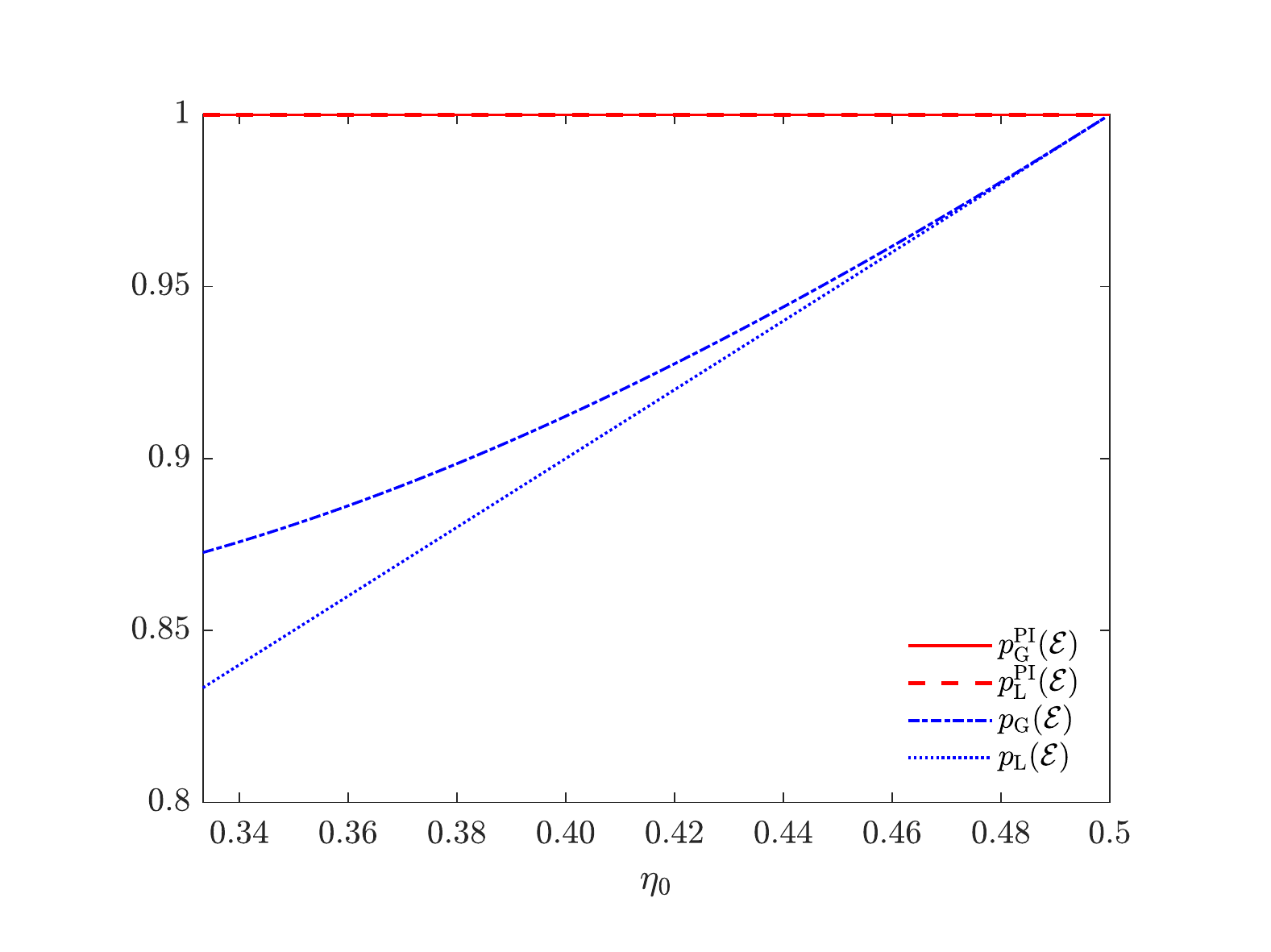}}
\caption{{\bf Annihilating NLWE by PI in terms of ME.} For all $\eta_{0}\in[\frac{1}{3},\frac{1}{2})$, 
$p_{\rm L}(\mathcal{E})$ (dotted blue line) is less than
$p_{\rm G}(\mathcal{E})$ (dot-dashed blue line), 
but $p_{\rm L}^{\rm PI}(\mathcal{E})$ (dashed red line)
is equal to $p_{\rm G}^{\rm PI}(\mathcal{E})$ (solid red line).}\label{fig:lme}
\end{figure}
\section{Creating NLWE by postmeasurement information}
In this section, we consider the opposite situation to the previous section; the PI about the prepared subensemble $\mathcal{E}_{b}$ in Eq.~\eqref{eq:suben} creates NLWE.
After providing an example of a state ensemble $\mathcal{E}$ in Eq.~\eqref{eq:ense}, we first show that NLWE does not occur in discriminating the states of the ensemble. With the same ensemble, we further show the occurrence of NLWE in the state discrimination with the help of PI, thus, creating NLWE by PI.
\begin{example}\label{ex:create}
Let us consider the ensemble $\mathcal{E}$ in Eq.~\eqref{eq:ense} with
\begin{equation}\label{eq:ftqs01}
\begin{array}{lcllcl}
\eta_{0}&=&\frac{\gamma}{2(1+\gamma)},&
\rho_{0}&=&|0\rangle\!\langle0|\!\otimes\!|0\rangle\!\langle0|,\\[1mm]
\eta_{1}&=&\frac{\gamma}{2(1+\gamma)},&
\rho_{1}&=&|0\rangle\!\langle0|\!\otimes\!|1\rangle\!\langle1|,\\[1mm]
\eta_{+}&=&\frac{1}{2(1+\gamma)},& 
\rho_{+}&=&|+\rangle\!\langle+|\!\otimes\!|+\rangle\!\langle+|,\\[1mm]
\eta_{-}&=&\frac{1}{2(1+\gamma)},& 
\rho_{-}&=&|+\rangle\!\langle+|\!\otimes\!|-\rangle\!\langle-|,
\end{array}
\end{equation}
where $2\leqslant\gamma<\infty$.
In this case, the subensembles in Eq.~\eqref{eq:suben} become
\begin{equation}
\begin{array}{l}
\mathcal{E}_{0}=\{\frac{1}{2},|0\rangle\!\langle0|\!\otimes\!|0\rangle\!\langle0|,\ 
\frac{1}{2},|0\rangle\!\langle0|\!\otimes\!|1\rangle\!\langle1|\},\\[1mm]
\mathcal{E}_{1}=\{\frac{1}{2},|+\rangle\!\langle+|\!\otimes\!|+\rangle\!\langle+|,\ 
\frac{1}{2},|+\rangle\!\langle+|\!\otimes\!|-\rangle\!\langle-|\},
\end{array}
\end{equation}
with the probabilities of preparation  
$\frac{\gamma}{1+\gamma}$ and $\frac{1}{1+\gamma}$, respectively.
\end{example}
\indent To show the nonoccurrence of NLWE in terms of ME about the ensemble $\mathcal{E}$ in Example~\ref{ex:create}, we first evaluate the optimal success probability $p_{\rm G}(\mathcal{E})$ defined in Eq.~\eqref{eq:pgem}.
From the optimality condition in Eq.~\eqref{eq:mdnsc}
together with a straightforward calculation, 
we can easily verify that
the following POVM $\{M_{i}\}_{i\in\Lambda}$ is optimal for $p_{\rm G}(\mathcal{E})$:
\begin{equation}\label{eq:fmem}
\begin{array}{lcl}
M_{0}=|\nu_{-}\rangle\!\langle\nu_{-}|\otimes|0\rangle\!\langle0|,\
M_{+}=|\nu_{+}\rangle\!\langle\nu_{+}|\otimes|+\rangle\!\langle+|,\\[1mm]
M_{1}=|\nu_{-}\rangle\!\langle\nu_{-}|\otimes|1\rangle\!\langle1|,\ 
M_{-}=|\nu_{+}\rangle\!\langle\nu_{+}|\otimes|-\rangle\!\langle-|,
\end{array}
\end{equation}
where
\begin{equation}\label{eq:nupm}
\begin{array}{c}
|\nu_{\pm}\rangle=\sqrt{\frac{1}{2}\mp\frac{\gamma}{2\sqrt{1+\gamma^{2}}}}|0\rangle\pm\sqrt{\frac{1}{2}\pm\frac{\gamma}{2\sqrt{1+\gamma^{2}}}}|1\rangle.
\end{array}
\end{equation}
Thus, the optimality of the POVM $\{M_{i}\}_{i\in\Lambda}$ in Eq.~\eqref{eq:fmem} and the definition of $p_{\rm G}(\mathcal{E})$ lead us to
\begin{equation}
\begin{array}{c}
p_{\rm G}(\mathcal{E})=\frac{1}{2}\Big(1+\frac{\sqrt{1+\gamma^{2}}}{1+\gamma}\Big).
\end{array}
\end{equation}
\indent The measurement given in Eq.~\eqref{eq:fmem} can be achieved with finite-round LOCC: First, a local measurement $\{|\nu_{+}\rangle\!\langle\nu_{+}|,|\nu_{-}\rangle\!\langle\nu_{-}|\}$
is performed on the first subsystem, and then
according to
$|\nu_{+}\rangle\!\langle\nu_{+}|$ or $|\nu_{-}\rangle\!\langle\nu_{-}|$,
a local measurement
$\{|+\rangle\!\langle+|,|-\rangle\!\langle-|\}$ or $\{|0\rangle\!\langle0|,|1\rangle\!\langle1|\}$ is performed 
on the second subsystem. Thus, the success probability for the LOCC measurement in Eq.~\eqref{eq:fmem} is a lower bound of $p_{\rm L}(\mathcal{E})$ in Eq.~\eqref{eq:plelocc}, therefore,
\begin{equation}\label{eq:plpggeq}
\begin{array}{c}
p_{\rm L}(\mathcal{E})\geqslant \frac{1}{2}\Big(1+\frac{\sqrt{1+\gamma^{2}}}{1+\gamma}\Big)
\end{array}
\end{equation}
for the ensemble $\mathcal{E}$ in Example~\ref{ex:create}. Moreover, from the definitions of $p_{\rm G}(\mathcal{E})$ and $p_{\rm L}(\mathcal{E})$ in Eqs.~\eqref{eq:pgem} and \eqref{eq:plelocc}, 
respectively, we have
\begin{equation}\label{eq:pgplgeq}
p_{\rm G}(\mathcal{E})\geqslant p_{\rm L}(\mathcal{E}).
\end{equation}
Inequalities \eqref{eq:plpggeq} and \eqref{eq:pgplgeq} lead us to
\begin{equation}\label{eq:pleqpge}
\begin{array}{c}
p_{\rm L}(\mathcal{E})=p_{\rm G}(\mathcal{E})=\frac{1}{2}\Big(1+\frac{\sqrt{1+\gamma^{2}}}{1+\gamma}\Big).
\end{array}
\end{equation}
Thus, NLWE does not occur in terms of ME in discriminating the states of the ensemble $\mathcal{E}$ in Example~\ref{ex:create}.\\
\indent Now, we show that NLWE occurs when the PI about the prepared subensemble is available. Let us consider the following POVM
$\{M_{\vec{\omega}}\}_{\vec{\omega}\in\Omega}$:
\begin{equation}\label{eq:bellm}
\begin{array}{l}
M_{(0,+)}=|\Phi_{+}\rangle\!\langle\Phi_{+}|,\,
M_{(0,-)}=|\Phi_{-}\rangle\!\langle\Phi_{-}|,\\[2mm]
M_{(1,+)}=|\Psi_{+}\rangle\!\langle\Psi_{+}|,\,
M_{(1,-)}=|\Psi_{-}\rangle\!\langle\Psi_{-}|,
\end{array}
\end{equation}
where $|\Phi_{\pm}\rangle$ and $|\Psi_{\pm}\rangle$ are Bell states,
\begin{equation}\label{eq:bells}
\begin{array}{l}
|\Phi_{\pm}\rangle=\frac{1}{\sqrt{2}}
(|00\rangle\pm|11\rangle),\\[2mm]
|\Psi_{\pm}\rangle=\frac{1}{\sqrt{2}}
(|01\rangle\pm|10\rangle).
\end{array}
\end{equation}
\indent From a straightforward calculation, we can easily see that the success probability obtained from the measurement of Eq.~\eqref{eq:bellm} in discriminating the states in the ensemble $\mathcal{E}$ with PI is one,
\begin{equation}\label{eq:pgpie1} 
p_{\rm G}^{\rm PI}(\mathcal{E})=1.
\end{equation}
That is, the states of $\mathcal{E}$ can be perfectly discriminated when PI is available.\\
\indent In order to obtain 
the maximum success probability $p_{\rm L}^{\rm PI}(\mathcal{E})$ in Eq.~\eqref{eq:plpie},
we consider lower and upper bounds of $p_{\rm L}^{\rm PI}(\mathcal{E})$. 
For a lower bound of $p_{\rm L}^{\rm PI}(\mathcal{E})$, let us first consider 
the average state ensemble $\tilde{\mathcal{E}}$ defined in Eqs.~\eqref{eq:trhow} and \eqref{eq:tedef}
with respect to Example~\ref{ex:create},
\begin{equation}\label{eq:er01pm}
\begin{array}{rcl}
\tilde{\eta}_{\vec{\omega}}&=&\frac{1}{4}\ \forall\vec{\omega}\in\Omega,\\[3mm]
\tilde{\rho}_{(0,\pm)}&=&\frac{\eta_{0}}{\eta_{0}+\eta_{\pm}}\rho_{0}+\frac{\eta_{\pm}}{\eta_{0}+\eta_{\pm}}\rho_{\pm}\\[1mm]
&=&\frac{\gamma}{1+\gamma}|0\rangle\!\langle0|\!\otimes\!|0\rangle\!\langle0|+\frac{1}{1+\gamma}|+\rangle\!\langle+|\!\otimes\!|\pm\rangle\!\langle\pm|
,\\[3mm]
\tilde{\rho}_{(1,\pm)}&=&\frac{\eta_{1}}{\eta_{1}+\eta_{\pm}}\rho_{1}+\frac{\eta_{\pm}}{\eta_{1}+\eta_{\pm}}\rho_{\pm}\\[1mm]
&=&\frac{\gamma}{1+\gamma}|0\rangle\!\langle0|\!\otimes\!|1\rangle\!\langle1|+\frac{1}{1+\gamma}|+\rangle\!\langle+|\!\otimes\!|\pm\rangle\!\langle\pm|,
\end{array}
\end{equation}
which satisfy
\begin{subequations}\label{eq:sspts}
\begin{eqnarray}
(\sigma_{0}\otimes\sigma_{2})\tilde{\rho}_{(0,\pm)}
(\sigma_{0}\otimes\sigma_{2})
&=&\tilde{\rho}_{(1,\mp)},\label{eq:sspts2}
\\
(\sigma_{0}\otimes\sigma_{1})\tilde{\rho}_{(0,\pm)}
(\sigma_{0}\otimes\sigma_{1})
&=&\tilde{\rho}_{(1,\pm)},\label{eq:sspts1}
\end{eqnarray}
\end{subequations}
with the Pauli operators $\sigma_{0}$ and $\sigma_{1}$ in Eq.~\eqref{eq:idpauli} and
\begin{equation}\label{eq:anpauli}
\begin{array}{rcl}
\sigma_{2}&=&-i|0\rangle\!\langle1|+i|1\rangle\!\langle0|.
\end{array}
\end{equation}
\indent We further consider the following Hermitian operators,
\begin{equation}
\tilde{\rho}_{(0,+)}-\tilde{\rho}_{(1,-)},\ 
\tilde{\rho}_{(1,+)}-\tilde{\rho}_{(0,-)},
\end{equation}
where both of them have the same four eigenvalues; two positive eigenvalues $\lambda_{+}$ and $\lambda_{-}$, and
two negative eigenvalues $-\lambda_{+}$ and $-\lambda_{-}$ with
\begin{equation}
\begin{array}{c}
\lambda_{\pm}=\frac{\sqrt{1+\gamma+\gamma^{2}}\pm\sqrt{1-\gamma+\gamma^{2}}}{2(1+\gamma)}
\end{array}
\end{equation}
for $2\!\leqslant\!\gamma\!<\!\infty$.
We denote $\Pi_{(0,+)}$ and $\Pi_{(1,-)}$ as
the projection operators onto the positive and negative eigenspaces of
$\tilde{\rho}_{(0,+)}-\tilde{\rho}_{(1,-)}$, respectively. Similarly, we denote
$\Pi_{(1,+)}$ and $\Pi_{(0,-)}$ as the projection operators onto the positive and negative eigenspaces of $\tilde{\rho}_{(1,+)}-\tilde{\rho}_{(0,-)}$, respectively.\\
\indent Now, we consider the following POVM $\{M_{\vec{\omega}}\}_{\vec{\omega}\in\Omega}$:
\begin{equation}\label{eq:mepim}
\begin{array}{ll}
M_{(0,+)}=\frac{1}{2}\Pi_{(0,+)},~
M_{(0,-)}=\frac{1}{2}\Pi_{(0,-)},\\[1mm]
M_{(1,+)}=\frac{1}{2}\Pi_{(1,+)},~
M_{(1,-)}=\frac{1}{2}\Pi_{(1,-)}.
\end{array}
\end{equation}
From the property of \eqref{eq:sspts2} and the definition of $\Pi_{\vec{\omega}}$, we can see that
\begin{equation}\label{eq:s0s2pi}
\begin{array}{c}
(\sigma_{0}\otimes\sigma_{2})\Pi_{(0,+)}
(\sigma_{0}\otimes\sigma_{2})
=\Pi_{(1,-)},\\[1mm]
(\sigma_{0}\otimes\sigma_{2})\Pi_{(1,+)}
(\sigma_{0}\otimes\sigma_{2})
=\Pi_{(0,-)},\\[1mm]
\Pi_{(0,+)}+\Pi_{(1,-)}=\mathbbm{1},\\[1mm]
\Pi_{(1,+)}+\Pi_{(0,-)}=\mathbbm{1}.
\end{array}
\end{equation}
Here we note that for any Hermitian operator $A$ satisfying,
\begin{equation}\label{eq:as02}
A+(\sigma_{0}\otimes\sigma_{2})A
(\sigma_{0}\otimes\sigma_{2})=\mathbbm{1},
\end{equation}
it holds that
\begin{equation}\label{eq:asr02}
\begin{array}{c}
\langle i0|A|j1\rangle
=\langle i1|A|j0\rangle
\end{array}
\end{equation}
for any $i,j\in\{0,1\}$.
From Eqs.~\eqref{eq:s0s2pi}--\eqref{eq:asr02}, we have
\begin{equation}\label{eq:ptpipm}
\mathrm{PT}(\Pi_{\vec{\omega}})=\Pi_{\vec{\omega}}\ \ \forall\vec{\omega}\in\Omega,
\end{equation}
which implies that $\Pi_{\vec{\omega}}$ is in $\mathrm{SEP}$ for any $\vec{\omega}\in\Omega$.
Thus, two POVMs $\{\Pi_{(0,+)},\Pi_{(1,-)}\}$ and $\{\Pi_{(1,+)},\Pi_{(0,-)}\}$ are separable.
Moreover, both of them can be performed using finite-round LOCC
because each of them consists of two orthogonal rank-2 projection operators  \cite{chit20141}.
The measurement given in Eq.~\eqref{eq:mepim} can be realized with finite-round LOCC 
by performing two LOCC measurements $\{\Pi_{(0,+)},\Pi_{(1,-)}\}$ and $\{\Pi_{(1,+)},\Pi_{(0,-)}\}$ 
with the equal probability $\frac{1}{2}$.\\
\indent The success probability of the LOCC measurement in Eq.~\eqref{eq:mepim} for the average state ensemble $\tilde{\mathcal{E}}$ in Eq.~\eqref{eq:er01pm} is 
\begin{equation}
\begin{array}{c}
\sum_{\vec{\omega}\in\Omega}\tilde{\eta}_{\vec{\omega}}
\mathrm{Tr}(\tilde{\rho}_{\vec{\omega}}M_{\vec{\omega}})=\frac{1}{4}\Big(1+\frac{\sqrt{1+\gamma+\gamma^{2}}}{1+\gamma}\Big).
\end{array}
\end{equation}
This probability is upper bounded by $p_{\rm L}(\tilde{\mathcal{E}})$
which is the maximum success probability for ME of $\tilde{\mathcal{E}}$
when the available measurements are limited to LOCC measurements,
\begin{equation}\label{eq:pltel}
\begin{array}{c}
p_{\rm L}(\tilde{\mathcal{E}})\geqslant\frac{1}{4}\Big(1+\frac{\sqrt{1+\gamma+\gamma^{2}}}{1+\gamma}\Big).
\end{array}
\end{equation}
Since any lower bound of $2p_{\rm L}(\tilde{\mathcal{E}})$ becomes
a lower bound of $p_{\rm L}^{\rm PI}(\mathcal{E})$ due to Eq.~\eqref{eq:plpie2}, we have
\begin{equation}\label{eq:lower1}
\begin{array}{rcl}
p_{\rm L}^{\rm PI}(\mathcal{E})\geqslant\frac{1}{2}\Big(1+\frac{\sqrt{1+\gamma+\gamma^{2}}}{1+\gamma}\Big).
\end{array}
\end{equation}
\indent To obtain an upper bound of $p_{\rm L}^{\rm PI}(\mathcal{E})$, let us first consider the following two operators,
\begin{equation}\label{eq:k0k1herm}
\begin{array}{rcl}
K_{0}&=&\frac{1}{2}\tilde{\rho}_{(0,+)}\Pi_{(0,+)}+\frac{1}{2}\tilde{\rho}_{(1,-)}\Pi_{(1,-)},\\[1mm]
K_{1}&=&\frac{1}{2}\tilde{\rho}_{(1,+)}\Pi_{(1,+)}+\frac{1}{2}\tilde{\rho}_{(0,-)}\Pi_{(0,-)}.
\end{array}
\end{equation}
Since the projective measurement $\{\Pi_{(0,+)},\Pi_{(1,-)}\}$ is optimal in ME between two states $\tilde{\rho}_{(0,+)}$ and $\tilde{\rho}_{(1,-)}$
with equal prior probability \cite{hels1976}, it satisfies a necessary and sufficient condition for a measurement 
to be optimal in ME 
between two states
$\tilde{\rho}_{(0,+)}$ and $\tilde{\rho}_{(1,-)}$ 
with equal prior probability $\frac{1}{2}$ \cite{hole1979,yuen1975,barn20092},
\begin{equation}\label{eq:kkkk01}
\begin{array}{c}
K_{0}-\frac{1}{2}\tilde{\rho}_{(0,+)}\succeq0,~
K_{0}-\frac{1}{2}\tilde{\rho}_{(1,-)}\succeq0.
\end{array}
\end{equation}
Similarly, $\{\Pi_{(1,+)},\Pi_{(0,-)}\}$ is the optimal
measurement in ME between two states $\tilde{\rho}_{(1,+)}$ and $\tilde{\rho}_{(0,-)}$
with equal prior probability $\frac{1}{2}$, thus,
\begin{equation}\label{eq:kkkk02}
\begin{array}{c}
K_{1}-\frac{1}{2}\tilde{\rho}_{(1,+)}\succeq0,~
K_{1}-\frac{1}{2}\tilde{\rho}_{(0,-)}\succeq0.
\end{array}
\end{equation}
We further note that $K_{0}$ and $K_{1}$ are Hermitian operators due to the positive semidefiniteness of \eqref{eq:kkkk01} and \eqref{eq:kkkk02}.\\
\indent Now, we consider a Hermitian operator,
\begin{equation}
\begin{array}{c}
\tilde{H}=\frac{1}{4}K_{0}+\frac{1}{4}K_{1}.
\end{array}
\end{equation}
We will show that 
$\tilde{H}-\tilde{\eta}_{\vec{\omega}}\tilde{\rho}_{\vec{\omega}}\in\mathrm{SEP}^{*}$ for all $\vec{\omega}\in\Omega$,
therefore, $2\mathrm{Tr}\tilde{H}$ is the upper bound of $p_{\rm L}^{\rm PI}(\mathcal{E})$ by Lemma~\ref{lem:plpietrh}.\\
\indent From Eqs.~\eqref{eq:sspts} and \eqref{eq:s0s2pi}, we can see that
\begin{equation}\label{eq:s0s2kk01}
\begin{array}{c}
(\sigma_{0}\otimes\sigma_{1})K_{0}(\sigma_{0}\otimes\sigma_{1})
=K_{1},\\[1mm]
(\sigma_{0}\otimes\sigma_{1})K_{1}(\sigma_{0}\otimes\sigma_{1})
=K_{0},\\[1mm]
(\sigma_{0}\otimes\sigma_{2})K_{0}(\sigma_{0}\otimes\sigma_{2})=K_{0},\\[1mm]
(\sigma_{0}\otimes\sigma_{2})K_{1}(\sigma_{0}\otimes\sigma_{2})=K_{1}.
\end{array}
\end{equation}
Moreover, for any Hermitian operator $A$ with
\begin{equation}\label{eq:as02a}
(\sigma_{0}\otimes\sigma_{2})A
(\sigma_{0}\otimes\sigma_{2})=A,
\end{equation}
it holds that
\begin{equation}\label{eq:as02ra}
\begin{array}{ll}
\langle i0|A|j0\rangle
=\langle i1|A|j1\rangle,\\[2mm]
\langle i0|A|j1\rangle
=-\langle i1|A|j0\rangle
\end{array}
\end{equation}
for any $i,j\in\{0,1\}$.
From Eqs.~\eqref{eq:s0s2kk01}--\eqref{eq:as02ra}, we have
\begin{equation}\label{eq:ptk0}
\begin{array}{c}
\mathrm{PT}(K_{0})
=(\sigma_{0}\otimes\sigma_{1})K_{0}(\sigma_{0}\otimes\sigma_{1})
=K_{1},\\[1mm]
\mathrm{PT}(K_{1})
=(\sigma_{0}\otimes\sigma_{1})K_{1}(\sigma_{0}\otimes\sigma_{1})
=K_{0}.
\end{array}
\end{equation}
Thus, for each $\vec{\omega}\in\Omega$, $\tilde{H}-\tilde{\eta}_{\vec{\omega}}\tilde{\rho}_{\vec{\omega}}$ can be rewritten as
\begin{equation}\label{eq:thtreq}
\begin{array}{rcl}
\tilde{H}-\tilde{\eta}_{(0,+)}\tilde{\rho}_{(0,+)}
&=&\frac{1}{4}(K_{0}-\frac{1}{2}\tilde{\rho}_{(0,+)})\\[1mm]
&&+\frac{1}{4}\mathrm{PT}(K_{0}-\frac{1}{2}\tilde{\rho}_{(0,+)}),
\\[2mm]
\tilde{H}-\tilde{\eta}_{(1,-)}\tilde{\rho}_{(1,-)}
&=&\frac{1}{4}(K_{0}-\frac{1}{2}\tilde{\rho}_{(1,-)})\\[1mm]
&&+\frac{1}{4}\mathrm{PT}(K_{0}-\frac{1}{2}\tilde{\rho}_{(1,-)}),
\\[2mm]
\tilde{H}-\tilde{\eta}_{(1,+)}\tilde{\rho}_{(1,+)}
&=&\frac{1}{4}\mathrm{PT}(K_{1}-\frac{1}{2}\tilde{\rho}_{(1,+)})\\[1mm]
&&+\frac{1}{4}(K_{1}-\frac{1}{2}\tilde{\rho}_{(1,+)}),
\\[2mm]
\tilde{H}-\tilde{\eta}_{(0,-)}\tilde{\rho}_{(0,-)}
&=&\frac{1}{4}\mathrm{PT}(K_{1}-\frac{1}{2}\tilde{\rho}_{(0,-)})\\[1mm]
&&+\frac{1}{4}(K_{1}-\frac{1}{2}\tilde{\rho}_{(0,-)}).
\end{array}
\end{equation}
\indent From the argument after Eq.~\eqref{eq:dualsep}
together with the positive semidefiniteness of \eqref{eq:kkkk01} and \eqref{eq:kkkk02},
each $\tilde{H}-\tilde{\eta}_{\vec{\omega}}\tilde{\rho}_{\vec{\omega}}$ in Eq.~\eqref{eq:thtreq} is in $\mathrm{SEP}^{*}$, therefore, Lemma~\ref{lem:plpietrh} leads us to
\begin{equation}\label{eq:upperb1}
\begin{array}{rcl}
p_{\rm L}^{\rm PI}(\mathcal{E})\leqslant2\mathrm{Tr}\tilde{H}
=\frac{1}{2}\Big(1+\frac{\sqrt{1+\gamma+\gamma^{2}}}{1+\gamma}\Big).
\end{array}
\end{equation}
Inequalities~\eqref{eq:lower1} and \eqref{eq:upperb1} imply
\begin{equation}\label{eq:plpi1}
\begin{array}{c}
p_{\rm L}^{\rm PI}(\mathcal{E})
=\frac{1}{2}\Big(1+\frac{\sqrt{1+\gamma+\gamma^{2}}}{1+\gamma}\Big).
\end{array}
\end{equation}
\indent From Eqs.~\eqref{eq:pgpie1} and \eqref{eq:plpi1}, we note that there exists
a nonzero gap between
$p_{\rm G}^{\rm PI}(\mathcal{E})$ and $p_{\rm L}^{\rm PI}(\mathcal{E})$,
\begin{equation}\label{eq:plleqpge}
\begin{array}{c}
p_{\rm L}^{\rm PI}(\mathcal{E})
=\frac{1}{2}\Big(1+\frac{\sqrt{1+\gamma+\gamma^{2}}}{1+\gamma}\Big)<1=p_{\rm G}^{\rm PI}(\mathcal{E}),
\end{array}
\end{equation}
for $2\leqslant\gamma<\infty$. Thus, NLWE occurs in terms of ME when the PI about the prepared subensemble is available.\\
\begin{figure}[!t]
\centerline{\includegraphics*[bb=20 20 430 310,scale=0.67]{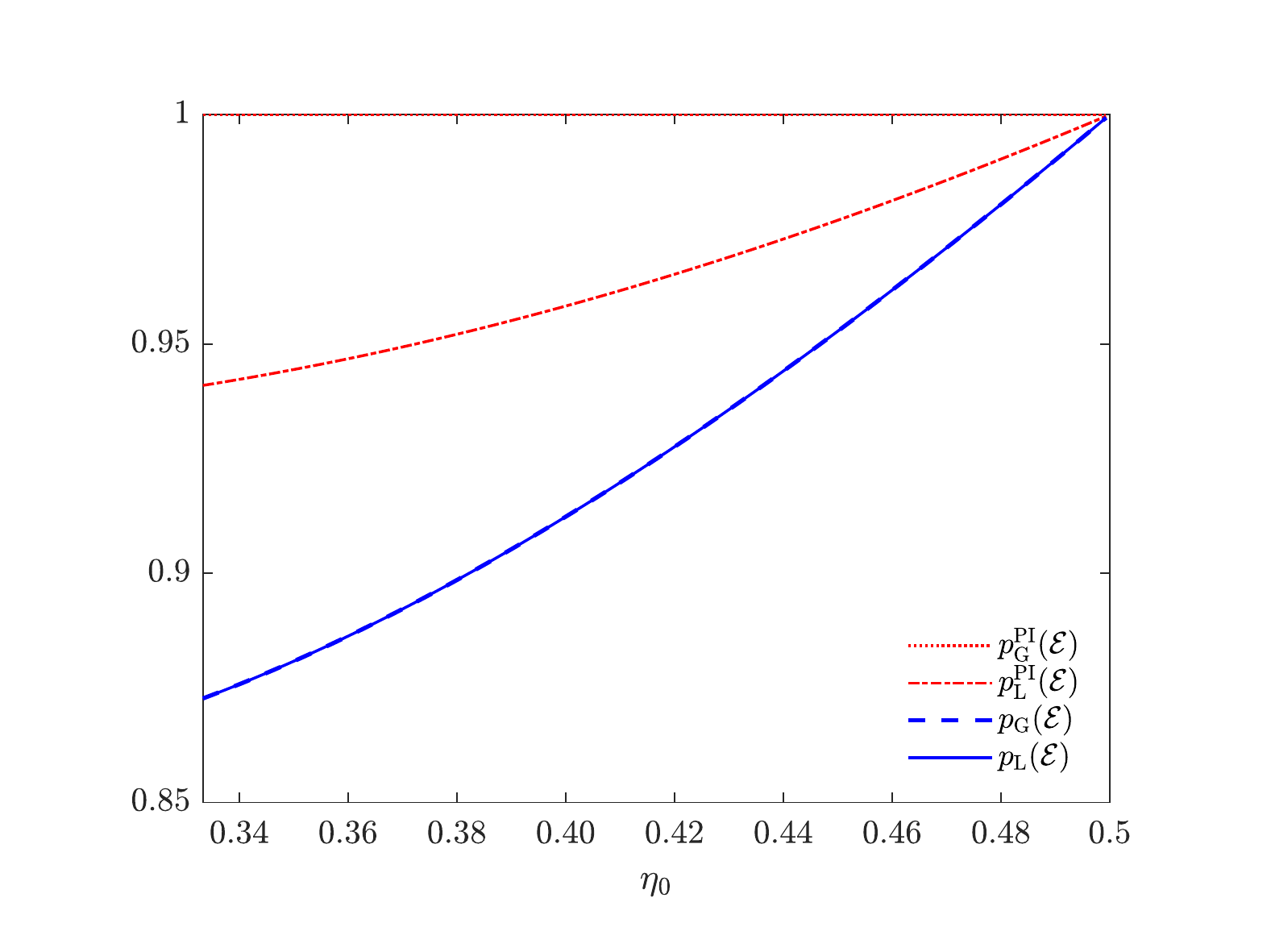}}
\caption{{\bf Creating NLWE by PI in terms of ME.} For all $\eta_{0}\in[\frac{1}{3},\frac{1}{2})$, 
$p_{\rm L}(\mathcal{E})$ (solid blue line) is equal to 
$p_{\rm G}(\mathcal{E})$ (dashed blue line), 
but $p_{\rm L}^{\rm PI}(\mathcal{E})$ (dot-dashed red line)
is less than $p_{\rm G}^{\rm PI}(\mathcal{E})$ (dotted red line).}\label{fig:ulme}
\end{figure}
\indent Equation~\eqref{eq:pleqpge} shows that NLWE does not occur in terms of ME about the ensemble $\mathcal{E}$ in Example~\ref{ex:create}, whereas Inequality~\eqref{eq:plleqpge} shows that NLWE occurs when PI is available.
Figure~\ref{fig:ulme} illustrates
the relative order of $p_{\rm G}(\mathcal{E})$, $p_{\rm L}(\mathcal{E})$, $p_{\rm G}^{\rm PI}(\mathcal{E})$, and $p_{\rm L}^{\rm PI}(\mathcal{E})$ for the range of $\frac{1}{3}\leqslant\eta_{0}<\frac{1}{2}$.
\begin{theorem}\label{thm:ulme}
For ME of the ensemble $\mathcal{E}$ in Example~\ref{ex:create}, the PI about the prepared subensemble creates NLWE.
\end{theorem}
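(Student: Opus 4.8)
The plan is to verify directly the two defining conditions for PI to create NLWE, namely $p_{\rm L}(\mathcal{E})=p_{\rm G}(\mathcal{E})$ together with $p_{\rm L}^{\rm PI}(\mathcal{E})<p_{\rm G}^{\rm PI}(\mathcal{E})$. Because the four states are separable, the inequalities $p_{\rm L}(\mathcal{E})\leqslant p_{\rm G}(\mathcal{E})$ and $p_{\rm L}^{\rm PI}(\mathcal{E})\leqslant p_{\rm G}^{\rm PI}(\mathcal{E})$ hold automatically, so the task reduces to (i) exhibiting an \emph{LOCC-realizable} measurement that already attains the global optimum in the no-PI setting, and (ii) establishing a strict gap in the PI setting by trapping $p_{\rm L}^{\rm PI}(\mathcal{E})$ between matching lower and upper bounds, both strictly below $p_{\rm G}^{\rm PI}(\mathcal{E})=1$.

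For step (i) I would first pin down $p_{\rm G}(\mathcal{E})$ by guessing an optimal POVM and checking the necessary and sufficient optimality condition in Eq.~\eqref{eq:mdnsc}; the product structure of the states suggests a POVM whose first-qubit part separates the $|0\rangle$-type states $\rho_{0},\rho_{1}$ from the $|+\rangle$-type states $\rho_{+},\rho_{-}$, with a conditional measurement on the second qubit. The essential observation is that this optimal POVM is a local von Neumann measurement on the first subsystem followed by a conditional local measurement on the second, hence finite-round LOCC. This gives $p_{\rm L}(\mathcal{E})\geqslant p_{\rm G}(\mathcal{E})$, which combined with the trivial reverse inequality forces $p_{\rm L}(\mathcal{E})=p_{\rm G}(\mathcal{E})$, so no NLWE occurs without PI.

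For step (ii) I would pass to the averaged ensemble $\tilde{\mathcal{E}}$ and use $p_{\rm L}^{\rm PI}(\mathcal{E})=2p_{\rm L}(\tilde{\mathcal{E}})$ from Eq.~\eqref{eq:plpie2}. The upper bound is the crux. I would exploit the Pauli symmetries relating the averaged states under conjugation by $\sigma_{0}\otimes\sigma_{1}$ and $\sigma_{0}\otimes\sigma_{2}$ to build, out of the optimal two-state discrimination projectors for the pairs $\{\tilde{\rho}_{(0,+)},\tilde{\rho}_{(1,-)}\}$ and $\{\tilde{\rho}_{(1,+)},\tilde{\rho}_{(0,-)}\}$, a single Hermitian witness $\tilde{H}$ whose shifts $\tilde{H}-\tilde{\eta}_{\vec{\omega}}\tilde{\rho}_{\vec{\omega}}$ each split as a positive-semidefinite operator plus the partial transpose of a positive-semidefinite operator. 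Since any such sum lies in $\mathrm{SEP}^{*}$, Lemma~\ref{lem:plpietrh} then supplies the upper bound $2\mathrm{Tr}\tilde{H}$, while the matching lower bound comes from the explicit separable, finite-round LOCC measurement assembled from those same projectors. In parallel, $p_{\rm G}^{\rm PI}(\mathcal{E})=1$ follows by displaying a global Bell-state measurement that discriminates perfectly once PI is available.

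The hardest step will be the partial-transpose bookkeeping that converts the symmetry identities into the clean decomposition above. Concretely, one must show that $\sigma_{0}\otimes\sigma_{2}$-invariance of the operators $K_{0},K_{1}$ forces the matrix-element relations yielding $\mathrm{PT}(K_{0})=K_{1}$, so that the required membership $\tilde{H}-\tilde{\eta}_{\vec{\omega}}\tilde{\rho}_{\vec{\omega}}\in\mathrm{SEP}^{*}$ reduces to the already-available positivity of $K_{b}-\tfrac{1}{2}\tilde{\rho}_{\vec{\omega}}$ inherited from optimal two-state discrimination. Once the lower and upper bounds coincide at $\frac{1}{2}\bigl(1+\tfrac{\sqrt{1+\gamma+\gamma^{2}}}{1+\gamma}\bigr)<1$ for all $2\leqslant\gamma<\infty$, the strict inequality $p_{\rm L}^{\rm PI}(\mathcal{E})<p_{\rm G}^{\rm PI}(\mathcal{E})$ holds, and together with step (i) this certifies that PI creates NLWE.
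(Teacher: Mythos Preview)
Your proposal is correct and mirrors the paper's own proof almost exactly: the paper verifies $p_{\rm L}(\mathcal{E})=p_{\rm G}(\mathcal{E})$ via the LOCC-realizable optimal POVM in Eq.~\eqref{eq:fmem}, shows $p_{\rm G}^{\rm PI}(\mathcal{E})=1$ by the Bell measurement \eqref{eq:bellm}, and then sandwiches $p_{\rm L}^{\rm PI}(\mathcal{E})$ using the projectors $\Pi_{\vec{\omega}}$, the operators $K_{0},K_{1}$, and the witness $\tilde{H}=\tfrac{1}{4}(K_{0}+K_{1})$, with the key step being exactly the $\mathrm{PT}(K_{0})=K_{1}$ identity you flag as the crux. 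The only detail you leave implicit is that the lower-bound LOCC measurement is realized as the equal mixture $M_{\vec{\omega}}=\tfrac{1}{2}\Pi_{\vec{\omega}}$ of the two separable rank-$2$ projective measurements $\{\Pi_{(0,+)},\Pi_{(1,-)}\}$ and $\{\Pi_{(1,+)},\Pi_{(0,-)}\}$, each of which is LOCC because $\mathrm{PT}(\Pi_{\vec{\omega}})=\Pi_{\vec{\omega}}$ (a consequence of the same $\sigma_{0}\otimes\sigma_{2}$ symmetry).
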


\section{Discussion}
We have shown that the PI about the prepared subensemble can annihilate or create NLWE in discriminating multiparty nonorthogonal nonentangled quantum states. We have first provided a two-qubit state ensemble
consisting of four nonorthogonal separable states (Example~\ref{ex:annihilate}) and shown that NLWE occurs in discriminating the states in the ensemble. With the same ensemble, we have further shown that the occurrence of NLWE in the state discrimination can be vanished when the PI about the prepared subensemble is available, thus, annihilating NLWE by PI (Theorem~\ref{thm:lme}).
Moreover, we have provided another two-qubit state ensemble consisting of four nonorthogonal separable states (Example~\ref{ex:create}) and shown that NLWE does not occur in discriminating the states of the ensemble. With the same ensemble, we have further shown the occurrence of NLWE in the state discrimination with the PI about the prepared subensemble, thus, creating NLWE by PI (Theorem~\ref{thm:ulme}).\\
\indent We note that in both Examples~\ref{ex:annihilate} and \ref{ex:create}, the prepared state can be perfectly identified by a global measurement when the PI about the prepared subensemble is provided. In Example~\ref{ex:annihilate}, the prepared state can be perfectly identified by a LOCC measurement when the PI about the prepared subensemble is available. However, in Example~\ref{ex:create}, the prepared state cannot be perfectly discriminated by a LOCC measurement even if the PI about the prepared subensemble is available. As far as we know, the latter is an example exhibiting NLWE in terms of perfect discrimination with the help of PI.\\
\indent We remark that the phenomenon of creating NLWE by PI cannot arise in perfectly discriminating orthogonal separable states because there is no better state discrimination than perfect discrimination.
On the other hand, the phenomenon of annihilating NLWE by PI can arise in perfectly discriminating orthogonal separable states with local indistinguishability, such as an \emph{unextendible product basis} (UPB) \cite{benn19992}.\\
\indent For example, let us consider a two-qutrit state ensemble $\{\frac{1}{5},\rho_{i}\}_{i=1}^{5}$ consisting of UPB states $\rho_{i}$ with the equal prior probability $\frac{1}{5}$ \cite{benn19992},
\begin{equation}
\begin{array}{l}
\rho_{1}=|\phi_{1}\rangle\!\langle\phi_{1}|\otimes|2\rangle\!\langle2|,~
\rho_{4}=|0\rangle\!\langle0|\otimes|\phi_{1}\rangle\!\langle\phi_{1}|,\\[2mm]
\rho_{2}=|\phi_{2}\rangle\!\langle\phi_{2}|\otimes|0\rangle\!\langle0|,~
\rho_{5}=|2\rangle\!\langle2|\otimes|\phi_{2}\rangle\!\langle\phi_{2}|,\\[2mm]
\rho_{3}=|\phi_{3}\rangle\!\langle\phi_{3}|\otimes|\phi_{3}\rangle\!\langle\phi_{3}|,
\end{array}
\end{equation}
where $\{|0\rangle,|1\rangle,|2\rangle\}$ is the standard basis in one-qutrit system and
\begin{equation}
\begin{array}{l}
|\phi_{1}\rangle=\frac{1}{\sqrt{2}}(|0\rangle-|1\rangle),\\[2mm]
|\phi_{2}\rangle=\frac{1}{\sqrt{2}}(|1\rangle-|2\rangle),\\[2mm]
|\phi_{3}\rangle=\frac{1}{\sqrt{3}}(|0\rangle+|1\rangle+|2\rangle).
\end{array}
\end{equation}
Since every UPB can be perfectly discriminated by global measurements but cannot be perfectly discriminated only by LOCC \cite{divi2003,deri2004}, NLWE occurs in terms of the perfect discrimination of $\{\frac{1}{5},\rho_{i}\}_{i=1}^{5}$. 
However, the occurrence of NLWE can be vanished by PI because the prepared state can be perfectly identified in the following situation: The classical information on whether the prepared state belongs to $\{\rho_{1},\rho_{2},\rho_{3}\}$ or $\{\rho_{4},\rho_{5}\}$ is provided after a LOCC measurement $\{M_{(i,j)}\}_{i,j=1}^{3}$,
\begin{equation}
M_{(i,j)}=|\phi_{i}\rangle\!\langle\phi_{i}|\otimes|\phi_{j}\rangle\!\langle\phi_{j}|,~i,j=1,2,3,
\end{equation}
where each $M_{(i,j)}$ indicates the detection of $\rho_{i}$ or $\rho_{3+j}$ depending on whether the set to which the prepared state belongs is $\{\rho_{1},\rho_{2},\rho_{3}\}$ or $\{\rho_{4},\rho_{5}\}$. 
Thus, annihilating NLWE by PI.\\
\indent Our result can provide a useful method to share or hide information using nonorthogonal separable states~\cite{terh20011,divi2002,egge2002,raha2015,wang20171,bani2021}. In Example~\ref{ex:annihilate}, the PI about the prepared subensemble makes the information locally accessible, and the information can be locally shared between parties. 
On the other hand, in Example~\ref{ex:create}, the PI about the prepared subensemble makes the information
globally accessible but not locally, and the globally accessible information can be locally hidden to some extent.
Our results can also be applied to multiparty secret sharing, such as two-qubit nonlocal bases with multicopy adaptive local distinguishability \cite{bani2021}.
We finally remark that it would be an interesting future task to investigate if
the availability of PI affects the occurrence of NLWE in terms of other optimal discrimination strategies besides ME.
\section*{Acknowledgements}
This work was supported by Basic Science Research Program (Program No. NRF-2020R1F1A1A010501270) and Quantum Computing Technology Development Program (Program No. NRF-2020M3E4A1080088) through the National Research Foundation of Korea(NRF) Grant funded by the Korea government (Ministry of Science and ICT).

\end{document}